\journal{Signal Processing}
\newtheorem{lem}{Lemma}
\newtheorem{res}{Result}
\definecolor{brickred}{rgb}{0.6,0,0}
\definecolor{byzantium}{rgb}{0.44,0.16,0.6}
\definecolor{darkgreen}{rgb}{0,0.6,0}
\begin{document}

\begin{frontmatter}

\title{Sparse-Based Estimation Performance for Partially Known  Overcomplete  Large-Systems }


\author[mymainaddress]{Guillaume Bouleux\corref{mycorrespondingauthor}}
\cortext[mycorrespondingauthor]{Corresponding author}
\ead{guillaume.bouleux@insa-lyon.fr}

\author[mysecondaryaddress]{R\'emy Boyer}
\ead{remy.boyer@l2s.centralesupelec.fr}

\address[mymainaddress]{Univ Lyon, INSA-Lyon, UJM-Saint-Etienne, DISP, EA 4570, 69621 Villeurbanne, France}
\address[mysecondaryaddress]{L2S laboratory (University of Paris-Sud, CentraleSupelec, CNRS), France}

\begin{abstract}
We assume the direct sum $\langle {\bf A}\rangle \oplus \langle {\bf B}\rangle$ for the signal subspace. As a result of post-measurement, a number of operational contexts presuppose the a priori knowledge of the $L_{B}$-dimensional "interfering" subspace $\langle {\bf B}\rangle$ and the goal is to estimate the $L_A$ amplitudes corresponding to subspace $\langle {\bf A}\rangle$. Taking into account the knowledge of the orthogonal "interfering" subspace $\langle {\bf B}\rangle\perp$, the Bayesian estimation lower bound is derived for the $L_A$-sparse vector in the doubly asymptotic scenario, i.e. $N, L_A, L_B \rightarrow \infty$ with a finite asymptotic ratio. By jointly exploiting the Compressed Sensing (CS) and the Random Matrix Theory (RMT) frameworks, closed-form expressions for the lower bound on the estimation of the non-zero entries of a sparse vector of interest are derived and studied. The derived closed-form expressions enjoy several interesting features: (i) a simple interpretable expression, (ii) a very low computational cost especially in the doubly asymptotic scenario, (iii) an accurate prediction of the mean-square-error (MSE) of popular sparse-based estimators and (iv) the lower bound remains true for any amplitudes vector priors. Finally, several idealized scenarios are compared to the derived bound for a common output signal-to-noise-ratio (SNR) which shows the interest of the joint estimation/rejection methodology derived herein.
\end{abstract}

\begin{keyword}
Overcomplete Bayesian linear model, asymptotic estimation performance,  subspace prior-knowledge, large-systems
\end{keyword}

\end{frontmatter}

\section{Introduction}

The  Compressive Sampling or Compressed Sensing (CS) is an attractive domain which gives new trends for people interested in sampling theory of sparse signals \citep{Baraniuk2008,Candes2005,Donoho2006}. The CS theory states that a sparse signal, \emph{i.e.}, a signal that can be decomposed as few non-zero values in a given basis (Fourier, wavelets,  etc.) can be sampled at a rate $T_S$ lower than the one predicted by the Shannon's theory. This paradigm has  been successfully exploited for solving ill-posed problems arising for instance in bio-medical analysis, RADAR detection, array processing,  wireless communications and radioastronomy imaging. In the CS framework, it is well known that any  matrix ${\bf H}$ of size $N \times L$ generated from an i.i.d. centered sub-Gaussian distribution with a variance of $1/N$ verifies the Restricted Isometry Property (RIP) \citep{Candes2005}  with a high probability \citep{Baraniuk2008}. On the other hand, the  doubly asymptotic spectrum and the empirical moments of the product ${\bf H}^T{\bf H}$  have been extensively studied  in the context of the Random Matrix Theory (RMT)  \citep{Couillet}. 

In the literature, CS and RMT techniques are usually applied to the noisy linear model where there is no  interfering signals. However, in a wide range of real life applications, the signal of interest is often corrupted by a partially known  interfering signal and an additive noise  (see \citep{Boyer2008,Bouleux2008,Bouleux2009,Wirfalt2012,Bouleux2013b}   for instance). This context motivates this work. More specifically, the CS and the RMT frameworks will be associated to derive new analytical closed-form expressions for the Bayesian lower bound  \citep{VanTrees2007}  on the estimation of a sparse amplitude vector \citep{Ollier2016} for the noisy linear model corrupted by a partially known  interfering signal.

\section{Compressed Sensing (CS) integrating an {\em a priori} knowledge}
\subsection{Definition of the CS model}\label{CSmodel}

Let ${\bf y}$ an observed vector of $N$ measurements corrupted by  an additive white centered zero-mean, Gaussian circular noise vector of variance $\sigma^2$. The standard CS model \citep{Candes2005,Baraniuk2008,Donoho2006} is defined according to 
\begin{eqnarray}\label{cs_model}
{\bf y} = {\bf \Psi} {\bf s} + {\bf n} = {\bf \Psi} {\bf \Phi}{\bf x}  + {\bf n}  
\end{eqnarray}
where ${\bf \Psi}$ is the known measurement matrix of size $N \times K$ with $N<K$, the vector ${\bf s}={\bf \Phi}{\bf x} $ of size $K\times 1$ admits an $L$-sparse representation, denoted by ${\bf x}$, in the basis ${\bf \Phi}$ (which could be Fourier basis, Wavelets basis, canonical basis, etc.) with $L<N$ and where ${\bf H} \stackrel{{\rm def.}}{=}  {\bf \Psi} {\bf \Phi}$ is often called the overcomplete dictionary.\\
One of the main  problems risen up by the theory of the Compressed Sensing relates to the minimum number of measurements $N$ needed for retrieving  the $L$-sparse vector $\bf{x}$. To address this problem, the authors of  \citep{Candes2005,Baraniuk2008,Donoho2006} have defined the Restricted Isometry Property (RIP). A standard strategy,  called universal design strategy to ensure that dictionary $\bf{H}$  satisfies the RIP condition with high probability, is to generate the {\rm i.i.d.} entries of dictionary matrix $\bf{H}$ following a sub-Gaussian distribution with zero mean and variance $1/N$ \citep{Baraniuk2008}.


\subsection{Exploiting the "interfering" subspace knowledge}
\label{sec:treatment}
 
\begin{figure}[htb]
\includegraphics[width=\columnwidth,height=9cm]{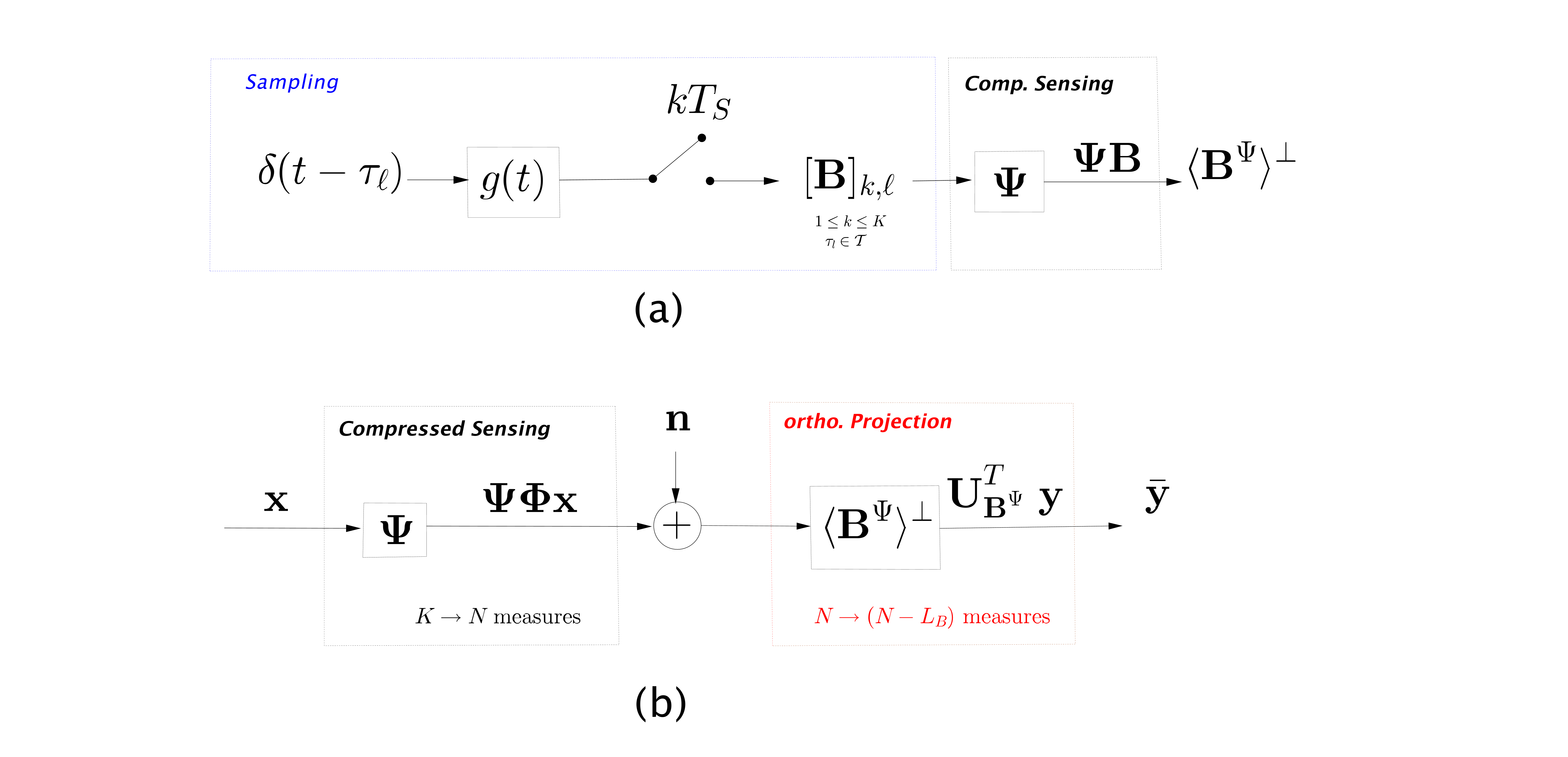}
\caption{$(a)$ Schematic construction of subspace $\langle {\bf B}^\psi \rangle$, $(b)$ Information processing scheme.}
\label{fig1}
\end{figure}

In many real life applications, we do have the knowledge of information given by the physics of the context. Those useful information help in tailoring models that precisely take into account the knowledge of particular frequencies \citep{Bouleux2013} for spectral analysis purpose, spatial angles for array processing \citep{Boyer2008} or RADAR  processing, and have demonstrated their power through biomedical analysis or radioastronomy imaging. So, we adopt the following "signal+interference" model $
{\bf s} = {\bf A} \boldsymbol{\alpha} + {\bf i} \ \mbox{with} \ {\bf i}={\bf B}  \boldsymbol{\beta}$
where $[{\bf A}]_{k,\ell} =  g(kT_S-\tau_\ell)$ with $1 \leq \ell \leq L_A$, $[{\bf B}]_{k,\ell'} =  g(kT_S-\tilde{\tau}_{\ell'})$ with $1 \leq' \ell \leq L_B$ are the "steering matrices" parametrized by the regular discretization at rate $T_S$ of a known waveform $g(t)$ along the time space. More precisely, $\mathcal{T}=\{\tau_\ell, 1\leq \ell \leq L_A\}$ stands for the time-delays of the $L_A$ sources of interest  $\boldsymbol{\alpha}$ and $\mathcal{\tilde{T}}=\{\tilde{\tau}_\ell, 1\leq \ell \leq L_B\}$ is  associated to the $L_B$ interfering sources $\boldsymbol{\beta}$. In the sequel, it is assumed that $(i)$ $\langle {\bf A}\rangle$ and $\langle {\bf B}\rangle$ are two disjoint subspaces, meaning that there is no time overlapping between  the sources of interest and of the interfering sources and $(ii)$ $\langle {\bf B}\rangle$ is known or previously estimated (matrix $\bf{A}$ and $\langle {\bf A}\rangle$ are unknown).  For instance,  the learning of  $\langle {\bf B}\rangle$ is based on pre-estimation of the clutter echo time-delays  in RADAR processing or by known strongly shining  "calibrator stars" in  radioastronomy imaging. The problem of interest  is to estimate vector $\boldsymbol{\alpha}$ based on a measurement vector where the contribution of $ {\bf i}$ has been removed using the knowledge of $\langle {\bf B}\rangle$. The standard "signal+interference" model described by signal ${\bf s}$ can be extended in the CS framework of model (\ref{cs_model}) following a straightforward strategy.  Let ${\bf \Phi}$ be a basis matrix such as $[{\boldsymbol{\Phi}}]_{k,k'} = g((k-k')T_S)$ where $1 \leq k,k' \leq K$. For a sufficiently fine partition, {\em i.e.},  for $K>N> L=L_A+L_B$, we have   $\langle {\bf A}\rangle \oplus \langle {\bf B}\rangle  \subset \langle {\bf \Phi}\rangle$.  Let ${\bf U}_{ {\bf B}^\psi  }$ be  a $N \times (N-L_{B})$  orthonormal basis matrix such as $\langle {\bf U}_{ {\bf B}^\psi  } \rangle = \langle {\bf B}^\psi \rangle^\perp$.  We have finally the deflated observation, defined according to
 \begin{eqnarray}\label{eq:obPCS}
{\bf \bar{y}} ={\bf U}_{ {\bf B}^\psi  }^T {\bf y} =  {\bf U}_{ {\bf B}^\psi  }^T {\bf H} {\bf x}+ {\bf \bar{n}}
\end{eqnarray}
 where  $ {\bf \bar{n}} = {\bf U}_{ {\bf B}^\psi  }^T{\bf n}$ and ${\bf x}$ is a $(K-L_A)$-sparse such as  ${\bf x}_\mathcal{T} = \boldsymbol{\alpha}$. The reader will find an illustration of the procedure in Fig. 1.


\section{ECRB for projected measurements and a large random dictionary}
\label{sec:perf}

\subsection{Dealing with projected measurements}

Let ${\rm MSE} =  \frac{1}{L_A}\mathbb{E}_{\bar{{\bf y}},\boldsymbol{\alpha}}\left[ \| \boldsymbol{\hat{\alpha}}({\bf \bar{y}}) - \boldsymbol{\alpha}\|^{2}\right] $ be the normalized Bayesian Mean Squared Error for an estimate $\boldsymbol{\hat{\alpha}}({\bf \bar{y}})$ of  $\boldsymbol{\alpha}$. The  Expected Cram\'er-Rao Bound (ECRB) \citep{VanTrees2007}, denoted by $C_{\mathbf{U}_{\mathbf{B}^{\psi}}^{T} \mathbf{A}^{\psi}}$ for the random amplitude vector $\boldsymbol{\alpha}$, of unspecified distribution $p(\boldsymbol{\alpha})$ given the observation model \eqref{eq:obPCS} fulfills  relation $
{\rm MSE}  \geq C_{\mathbf{U}_{\mathbf{B}^{\psi}}^{T} \mathbf{A}^{\psi}}=  \frac{\sigma^{2}}{L_A}\ {\rm Tr}\left\{ \left( \mathbf{A}^{\psi T} \mathbf{P}_{\mathbf{B^{\psi}}}^{\perp} \bf{A}^{\psi} \right)^{-1}\right\}$ where   ${\bf A}^\psi = \boldsymbol{\Psi} {\bf A}$. Introduce model $(\mathcal{M})$: $
 \bar{\bf{y}} | \boldsymbol{\alpha} \sim \mathcal{N}( \boldsymbol{\mu},\boldsymbol{\Sigma} ),$
where $\boldsymbol{\mu}={\bf U}_{ {\bf B}^\psi  }^T  {\bf A}^{\psi} \boldsymbol{\alpha}$ and $\boldsymbol{\Sigma} =\sigma^2 {\bf I}_{N-L_B} $  which is the covariance matrix of noise ${\bf \bar{n}}$. After some calculus, the ECRB admits the following expression:
\begin{align}
C_{\mathbf{U}_{\mathbf{B}^{\psi}}^{T} \mathbf{A}^{\psi}}&=   \frac{\sigma_\alpha^2}{  {\rm SNR^{(na)}}} \frac{{\rm Tr}\{ \mathbf{A}^{\psi T} \mathbf{P}_{\mathbf{B^{\psi}}}^{\perp} \bf{A}^{\psi}  \}}{ N-L_B} \frac{{\rm Tr}\left\{ \left( \mathbf{A}^{\psi T} \mathbf{P}_{\mathbf{B^{\psi}}}^{\perp} \bf{A}^{\psi} \right)^{-1}\right\}}{L_A} \label{eq:crbdef},
\end{align}
where  $
 {\rm SNR^{(na)}} = \frac{\mathbb{E} ||  \boldsymbol{\mu} ||^2}{{\rm Tr}\{\boldsymbol{\Sigma}\}} = \frac{\sigma_\alpha^2 {\rm Tr}\{ \mathbf{A}^{\psi T} \mathbf{P}_{\mathbf{B^{\psi}}}^{\perp} \bf{A}^{\psi}  \}}{\sigma^2(N-L_B)}$ is the output  and non-asymptotic SNR.

\subsection{Doubly asymptotic regime}
The practical interest of CRB-type expressions have been exposed in   \citep{Ben-Haim2010,Boyer2016} but we show in this work that expression \eqref{eq:crbdef} can be reduced to a very simple closed form expression  with the advantage of remaining valid even for the low sample regime, using some powerful results extracted  from the RMT  \citep{Couillet} where it is assumed   $N,L_A,L_B \rightarrow \infty$  with $N/L_A \rightarrow \rho$ and $L_B/L_A \rightarrow c$. Towards this goal, the following Lemma is provided.
\begin{lem}
\label{lem:1}
Let $\mathbf{F} = \mathbf{U}_{\mathbf{B}^{\psi}}^{T} \mathbf{A}^{\psi} \in \mathbb{R}^{(N-L_{B}) \times L_{A}}$ whose  elements $\left\{ F_{ij}\right\}_{i,j=1\ldots N-L_{B},L_{A}}$ are zero mean and i.i.d. with variance $\frac{1}{N}$.  Now, for $N,L_A,L_B \rightarrow \infty$,  and  $N/L_A \rightarrow \rho >1$, $(N-L_B)/L_A \rightarrow \tilde{\rho} = \rho -c >1$, then
 \begin{align}
 \frac{1}{L_{A}} {\rm Tr} \left\{ \left( \mathbf{F}^{T} \bf{F} \right)^{-1} \right\}& \overset{a.s.}{\longrightarrow} \frac{\rho}{\tilde{\rho}-1} = \frac{\rho}{\rho-c-1} \label{lem1:eq1},\\
 \frac{1}{N-L_{B}} {\rm Tr} \left\{  \mathbf{F}^{T} \bf{F}  \right\}& \overset{a.s.}{\longrightarrow} \frac{1}{\rho} \label{lem1:eq2},
\end{align}
where $a.s.$ stands for the almost sure convergence. 
\end{lem}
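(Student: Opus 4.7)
The plan is to recognise $\mathbf{F}^{T}\mathbf{F}$ as a rescaled Wishart-type matrix and invoke the Marchenko--Pastur theorem. Setting $M = N - L_{B}$ and $\mathbf{G} = \sqrt{N}\,\mathbf{F} \in \mathbb{R}^{M \times L_{A}}$, the entries $G_{ij}$ are i.i.d., centred, with unit variance, so $\mathbf{G}^{T}\mathbf{G}/M$ is a standard sample covariance matrix of aspect ratio $L_{A}/M \to 1/\tilde{\rho} \in (0,1)$, because $\tilde{\rho}=\rho-c>1$. The Marchenko--Pastur theorem then yields almost-sure weak convergence of the empirical spectral distribution $\hat{\mu}_{L_{A}}$ of $\mathbf{G}^{T}\mathbf{G}/M$ to the Marchenko--Pastur law $\mu_{1/\tilde{\rho}}$ supported on $[(1-1/\sqrt{\tilde{\rho}})^{2},(1+1/\sqrt{\tilde{\rho}})^{2}]$.

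For \eqref{lem1:eq2} I would bypass RMT altogether: $(N-L_{B})^{-1}\,\mathrm{Tr}(\mathbf{F}^{T}\mathbf{F}) = (MN)^{-1}\sum_{i,j} G_{ij}^{2}$ is $L_{A}/N$ times the empirical mean of $M L_{A}$ i.i.d.\ unit-variance random variables, and Kolmogorov's strong law of large numbers returns the limit $1/\rho$ directly.

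The substantive claim is \eqref{lem1:eq1}. Undoing the rescaling and writing the trace as an integral against $\hat{\mu}_{L_{A}}$, one has $L_{A}^{-1}\,\mathrm{Tr}((\mathbf{F}^{T}\mathbf{F})^{-1}) = (N/M)\int x^{-1}\, d\hat{\mu}_{L_{A}}(x)$. Since $\tilde{\rho}>1$, the limiting support stays bounded away from zero, and Bai--Yin's theorem gives that the smallest eigenvalue of $\mathbf{G}^{T}\mathbf{G}/M$ converges almost surely to $(1-1/\sqrt{\tilde{\rho}})^{2}>0$. This justifies passing the function $x\mapsto x^{-1}$ through the weak limit and produces $\int x^{-1}\, d\mu_{1/\tilde{\rho}}(x) = 1/(1 - 1/\tilde{\rho}) = \tilde{\rho}/(\tilde{\rho}-1)$, a value I would obtain from the Stieltjes transform of $\mu_{1/\tilde{\rho}}$ evaluated at $z=0$. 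Multiplying by $N/M \to \rho/\tilde{\rho}$ delivers the announced limit $\rho/(\rho - c - 1)$.

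The one delicate point is this last step: weak convergence alone does not imply convergence of $\int x^{-1}\, d\hat{\mu}_{L_{A}}(x)$ when the integrand is singular at $0$. The hypothesis $\tilde{\rho}>1$ is precisely what keeps the soft edge $(1-1/\sqrt{\tilde{\rho}})^{2}$ strictly positive and, through Bai--Yin, removes this obstruction; otherwise a point mass at $0$ would appear in the limiting spectral distribution and the right-hand side of \eqref{lem1:eq1} would diverge.
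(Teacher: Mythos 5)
Your proposal is correct, and for the main claim \eqref{lem1:eq1} it follows essentially the paper's route: identify the spectrum of the (suitably normalized) Gram matrix with a Marchenko--Pastur law and read off $\frac{1}{L_A}{\rm Tr}\{(\mathbf{F}^T\mathbf{F})^{-1}\}$ from the Stieltjes transform evaluated at $z=0$; you merely work with $\mathbf{G}^T\mathbf{G}/M$ and ratio $1/\tilde{\rho}<1$ where the paper works with $\rho\,\mathbf{F}^T\mathbf{F}$ and ratio $\tilde{\rho}>1$, which is the same computation up to rescaling. Two genuine differences are worth noting. First, you prove \eqref{lem1:eq2} by the strong law of large numbers applied to $\sum_{ij}G_{ij}^2$, whereas the paper deduces it from the first moment of the Marchenko--Pastur law via the spectral theorem; your route is more elementary and does not need the limiting spectral distribution at all (only note that, strictly speaking, the entries form a triangular array in $N$, so the SLLN step uses a fourth-moment/Borel--Cantelli variant, harmless under the sub-Gaussian assumption of the CS setting). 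Second, and more substantively, you explicitly justify the delicate step that the paper leaves implicit: weak convergence of the empirical spectral distribution does not by itself give convergence of $\int x^{-1}d\hat{\mu}_{L_A}(x)$, since $x\mapsto x^{-1}$ blows up at the origin; invoking the Bai--Yin theorem to show that the smallest eigenvalue stays almost surely bounded away from $0$ when $\tilde{\rho}>1$ is exactly what licenses evaluating the Stieltjes transform at $z=0$, and this makes your argument more complete than the appendix's, which simply asserts that the weighted Stieltjes transform at $z=0$ equals the limit of $\frac{1}{L_A}{\rm Tr}\{(\mathbf{F}^T\mathbf{F})^{-1}\}$.
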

\begin{proof}
See the appendix.
\end{proof}
Under the assumptions of Lemma \ref{lem:1} and using \eqref{eq:crbdef}, a very compact expression of $C_{\mathbf{U}_{\mathbf{B}^{\psi}}^{T} \mathbf{A}^{\psi}}^\infty$ is enunciated by the following.
\begin{res}\label{res1}
 Assume that $N, L_A,L_B\rightarrow \infty$  and  $N/L_A \rightarrow \rho >1$, $(N-L_B)/L_A \rightarrow \tilde{\rho} >1$, then, we have $
C_{\mathbf{U}_{\mathbf{B}^{\psi}}^{T} \mathbf{A}^{\psi}}  \overset{a.s.}{\longrightarrow} C^{\infty}_{\mathbf{U}_{\mathbf{B}^{\psi}}^{T} \mathbf{A}^{\psi}}= \frac{\sigma_\alpha^{2}}{{\rm SNR}} \frac{1}{\tilde{\rho}-1}$
where ${\rm SNR} = \frac{\sigma_\alpha^2}{\sigma^2 \rho}$ is the almost sure doubly asymptotic equivalent of ${\rm SNR^{(na)}}$.
\end{res}
\section{Benchmarking ECRBs and  estimators}
This section is devoted to give a relation of order between the ECRB given by \eqref{eq:crbdef} with respect to two other ECRBs viewed as benchmarks and to analyze the behavior of sparse-based estimators. Let  $
(\mathcal{M}_0) : {\bf y}_0|\boldsymbol{\alpha} ,\boldsymbol{\beta} \sim \mathcal{N}\left(\mathbf{A}^{\psi}\boldsymbol{\alpha} +   \mathbf{B}^{\psi} \boldsymbol{\beta}, \sigma_0^2 {\bf I}_N \right)$ and 
$(\mathcal{M}_1) : {\bf y}_1|\boldsymbol{\alpha} \sim \mathcal{N}\left(\mathbf{A}^{\psi}\boldsymbol{\alpha}, \sigma_1^2 {\bf I}_N \right)$. Model $\mathcal{M}_0$ is  associated with the scenario where no ad-hoc strategy is developed to mitigate the corruption from the interference signals. In other words, the interference signals are wrongly interpreted as signals of interest. So, this bound does not solve the problem of interest and is given by
 \begin{align}
C_{[\mathbf{A}^{\psi}  \mathbf{B}^{\psi}]}& =   \frac{\sigma_0^{2}}{L}\ {\rm Tr}\left\{ \left( [\mathbf{A}^{\psi}  \mathbf{B}^{\psi}]^{T}  [\bf{A}^{\psi}  \bf{B}^{\psi}] \right)^{-1}\right\} \\ 
&= \frac{\sigma_\alpha^{2}}{  {\rm SNR_0^{(na)} } }\frac{{\rm Tr}\left\{ \left( [\mathbf{A}^{\psi}  \mathbf{B}^{\psi}]^{T}  [\bf{A}^{\psi}  \bf{B}^{\psi}] \right)^{-1}\right\}}{L}\nonumber\\ &  \cdot  \frac{\sigma_\alpha^{2}\ {\rm Tr}\left\{ \left( \mathbf{A}^{\psi T} \bf{A}^{\psi} \right)\right\} \ +\ \sigma_\beta^2 \ {\rm Tr}\left\{ \left( \mathbf{B}^{\psi T} \bf{B}^{\psi} \right)\right\}}{ N}  \label{eq:crbL},  
 \end{align}
where  ${\rm SNR_0^{(na)}} =  \frac{\sigma_\alpha^{2}\ {\rm Tr}\left\{ \left( \mathbf{A}^{\psi T} \bf{A}^{\psi} \right)\right\} \ +\ \sigma_\beta^2 \ {\rm Tr}\left\{ \left( \mathbf{B}^{\psi T} \bf{B}^{\psi} \right)\right\}}{\sigma^2_0 N}$ and ${\rm SIR}= \sigma_\alpha^{2}/\sigma_\beta^{2}$. The second model $\mathcal{M}_1$  is  associated with the ideal free-interference scenario. This bound admits the following expression:
 \begin{align}
C_{\mathbf{A}^{\psi}}  =  \frac{\sigma_1^{2}}{L_A}\ {\rm Tr}\left\{ \left( \mathbf{A}^{\psi T} \mathbf{A}^{\psi} \right)^{-1}\right\}
=  \frac{\sigma_\alpha^{2}}{  {\rm SNR_1^{(na)}}} \ \frac{{\rm Tr}\left\{ \left( \mathbf{A}^{\psi T} \mathbf{A}^{\psi} \right)\right\}}{N }\frac{{\rm Tr}\left\{ \left( \mathbf{A}^{\psi T} \mathbf{A}^{\psi} \right)^{-1}\right\}}{L_A}, \label{eq:crbLA}
 \end{align}
where
 ${\rm SNR_1^{(na)}} = \frac{\sigma_\alpha^{2} \ {\rm Tr}\left\{ \left( \mathbf{A}^{\psi T} \mathbf{A}^{\psi} \right)\right\}}{\sigma^{2}_{1} N}.$ Using a similar methodology as before, we have the following result given in the doubly asymptotic regime context.
\begin{res}\label{res2}
Assume that $N,L_A \to \infty$ with $N/L_A \rightarrow \rho >1$, $N/L \rightarrow \bar{\rho} >1$, then, we have $
C_{[\mathbf{A}^{\psi}  \mathbf{B}^{\psi}]} \overset{a.s.}{\longrightarrow} C_{[\mathbf{A}^{\psi}  \mathbf{B}^{\psi}]}^\infty = \frac{\sigma_\alpha^{2}}{{\rm SNR}_0} \left(\frac{1-{\rm SIR}^{-1}}{\rho}+ \frac{{\rm SIR}^{-1}}{\bar{\rho}}\right)\frac{\bar{\rho}}{\bar{\rho}-1}$ and $
C_{\mathbf{A}^{\psi}}  \overset{a.s.}{\longrightarrow}  C_{\mathbf{A}^{\psi}}^\infty = \frac{\sigma_\alpha^{2}}{{\rm SNR}_1} \frac{1}{\rho-1}$ where  ${\rm SNR}_0 = \frac{(\sigma_\alpha^{2}-\sigma_\beta^{2})}{\sigma_0^2 \rho} + \frac{\sigma_\beta^{2} }{\sigma_0^2 \bar{\rho}}$ and ${\rm SNR}_1 = \frac{\sigma_\alpha^2 }{\sigma_1^2 \rho}$ are the almost sure doubly asymptotic equivalent of $\;{\rm SNR_0^{(na)} }$ and $\;{\rm SNR_1^{(na)} }$, respectively.
\end{res}
\subsection{Simulation Tests}
We propose now different scenarios for (i) numerically showing the limit of the doubly asymptotic approximation and (ii) for analyzing the behavior of popular sparse-based estimators when they are tailored with the account of the interference signal. We have plotted on Fig. 2 the Mean Square Error of the approximation between analytic expressions \eqref{eq:crbdef}, \eqref{eq:crbL}, \eqref{eq:crbLA} of the ECRBs and their doubly asymptotic expressions (Results \ref{res1} and \ref{res2}) with respect to the number of samples. This figure leaves no room for doubt about the speed at which the asymptotic expressions for the ECRBs merge the theoretical ones. Finally, we have computed (Fig. 3) for severals growing dimensions of the $L_{B}$-dimensional subspace $\langle \bf{B} \rangle$ and six punctual  values of $L_{A}$, selected from the range $[0.01 N : 0.3 N]$, the ECRBs. Like Fig. 2, we clearly see that both plots merge whatever the values for $L_A$ and $L_B$.\\

\begin{figure}[htb]
\includegraphics[width=\columnwidth,height=9cm]{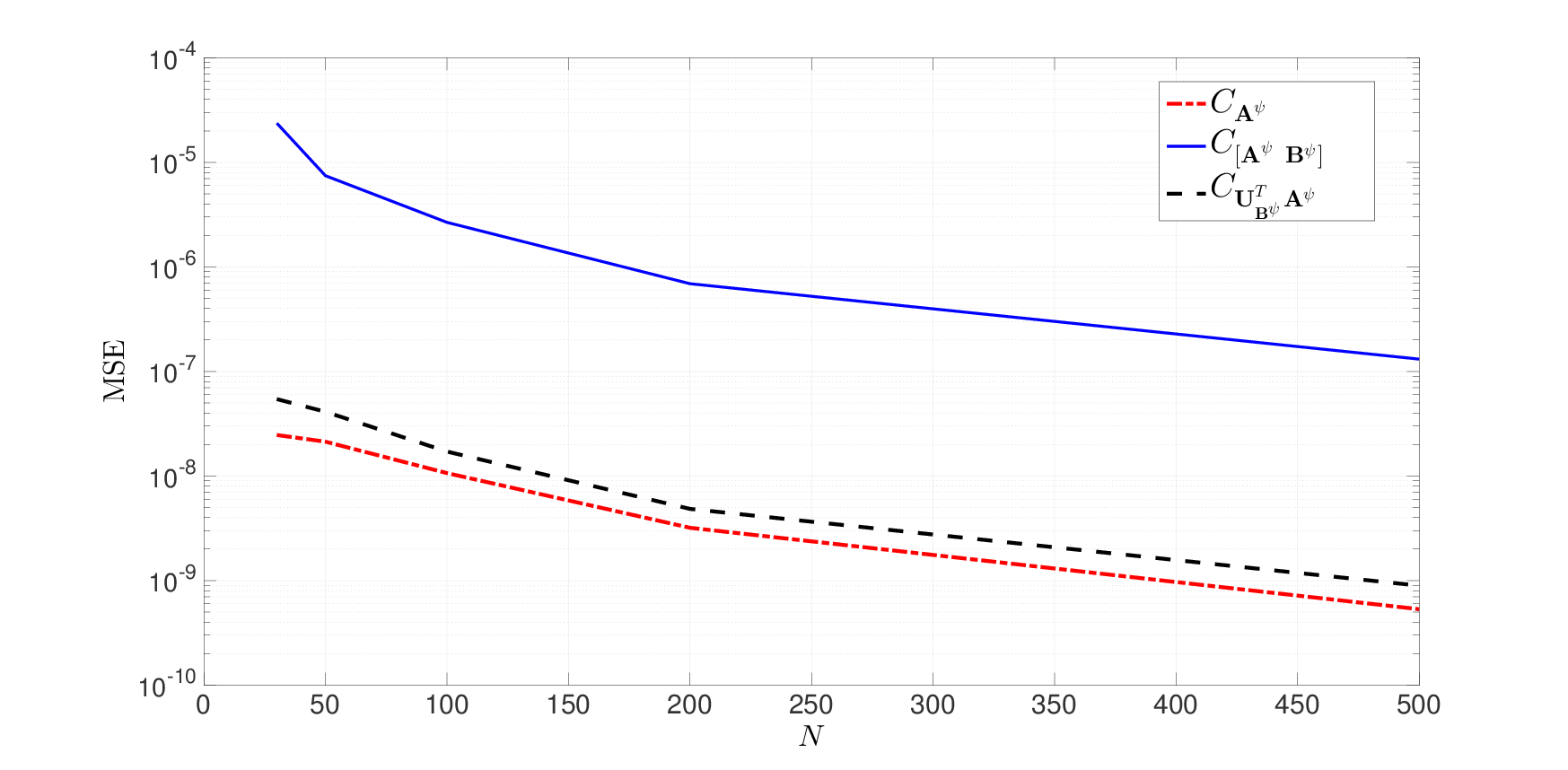}
 \caption{ MSE {\em Vs.} N for $L_A = L_B= N/10$, SNR=10 dB and $
\sigma_\alpha^2 = 1 $, $\sigma_\beta^2  = 10$}
 \label{fig2}
\end{figure}

\begin{figure}[htb]
\includegraphics[width=\columnwidth,height=9cm]{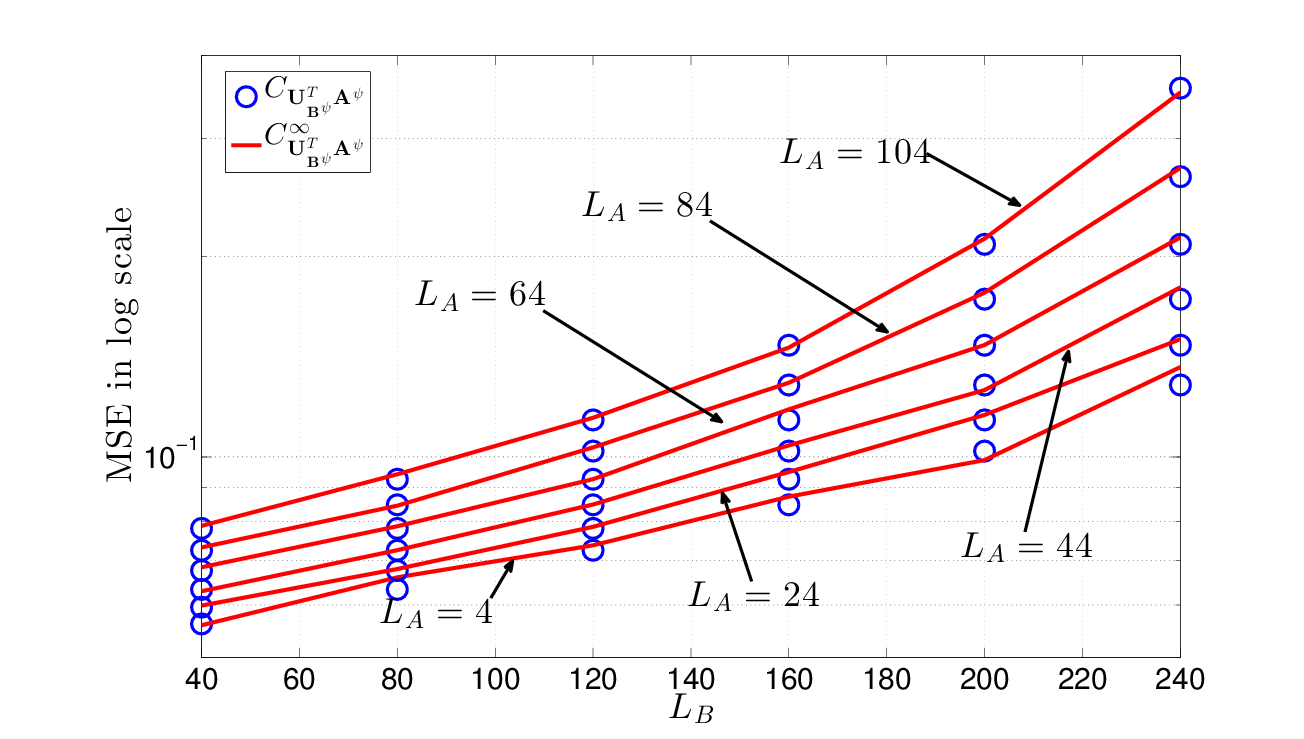}
 \caption{ MSE {\em Vs.} $L_B$  for different values of $L_A$, SNR =10 dB and $N = 100$, $\sigma_\alpha^2 = \sigma_\beta^2  = 1$}
 \label{fig3}
\end{figure}

We now turn the discussion to the behavior of the tailored estimators. To be sufficiently general, we have not designed the observation model with specific waveforms or delays but directly implemented dictionary $\mathbf{H}$ as presented by section \ref{CSmodel}. The Basis Pursuit DeNoise \citep{Chen1998}, the CoSaMP \citep{Needell2009} and the Orthogonal Matching Pursuit \citep{Tropp2007} sparse estimators have been computed with a deflated observation signal and a deflated dictionary for them to estimate the $L_{A}$ parameters of interest only. We have run 500  Monte-Carlo trials for each scenario with a common output SNR and confronted the Mean-Square-Error (MSE) with the ECRBs \eqref{eq:crbdef}, \eqref{eq:crbL}, \eqref{eq:crbLA} and their doubly asymptotic equivalents given in Results \ref{res1} and \ref{res2} for both scenarios drawn through Fig. 4. When the observation signal is composed by as much equally powered sources of interest as interfering ones (Fig. 4), we observe that $C_{\mathbf{U}_{\mathbf{B}^{T}} \mathbf{A}} $ is clearly next to $C_{\mathbf{A}} $.  Then, a quick analysis on both plots, reveals that none of the algorithms can perform well in a low SNR regime. The interfering signal has been almost rejected in that situation and only the OMP estimator seems to be in capacity to do  from a 20 dB SNR. The CoSaMP performs as well as the interfering signal was of interest, it is so unable to reject the influence of the interfering part and the BPDN does not reach any lower bounds. Curiously thereafter, when the signal of interest is buried into many interfering and much powerful elements (Fig. 5), we notice a significant gain (more than 15 dB) for each estimators when they account for the knowledge of the interferences. In that difficult scenario, the OMP still reaches  from a 20 dB SNR the lower bound $C_{\mathbf{U}_{\mathbf{B}^{T}} \mathbf{A}}$ which still remains close to the ideal bound $C_{\mathbf{A}} $ for which there is no interfering elements.

\begin{figure}[htb]
\includegraphics[width=\columnwidth,height=9cm]{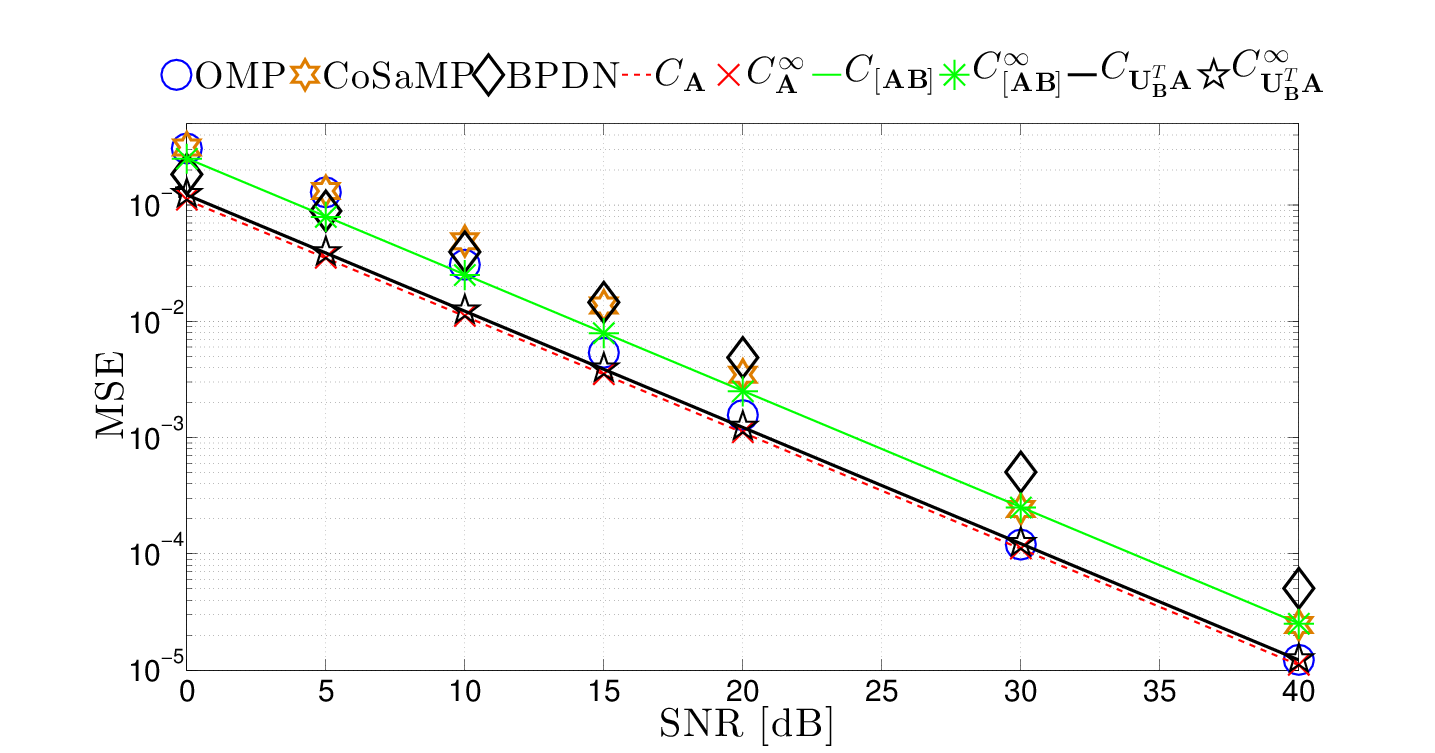}
 \caption{ MSE {\em Vs.} SNR in dB $L_A = L_B= 10$, $
N = 100$ and $
\sigma_\alpha^2 = \sigma_\beta^2  = 1$}
 \label{fig4}
\end{figure}

\begin{figure}[htb]
\includegraphics[width=\columnwidth,height=9cm]{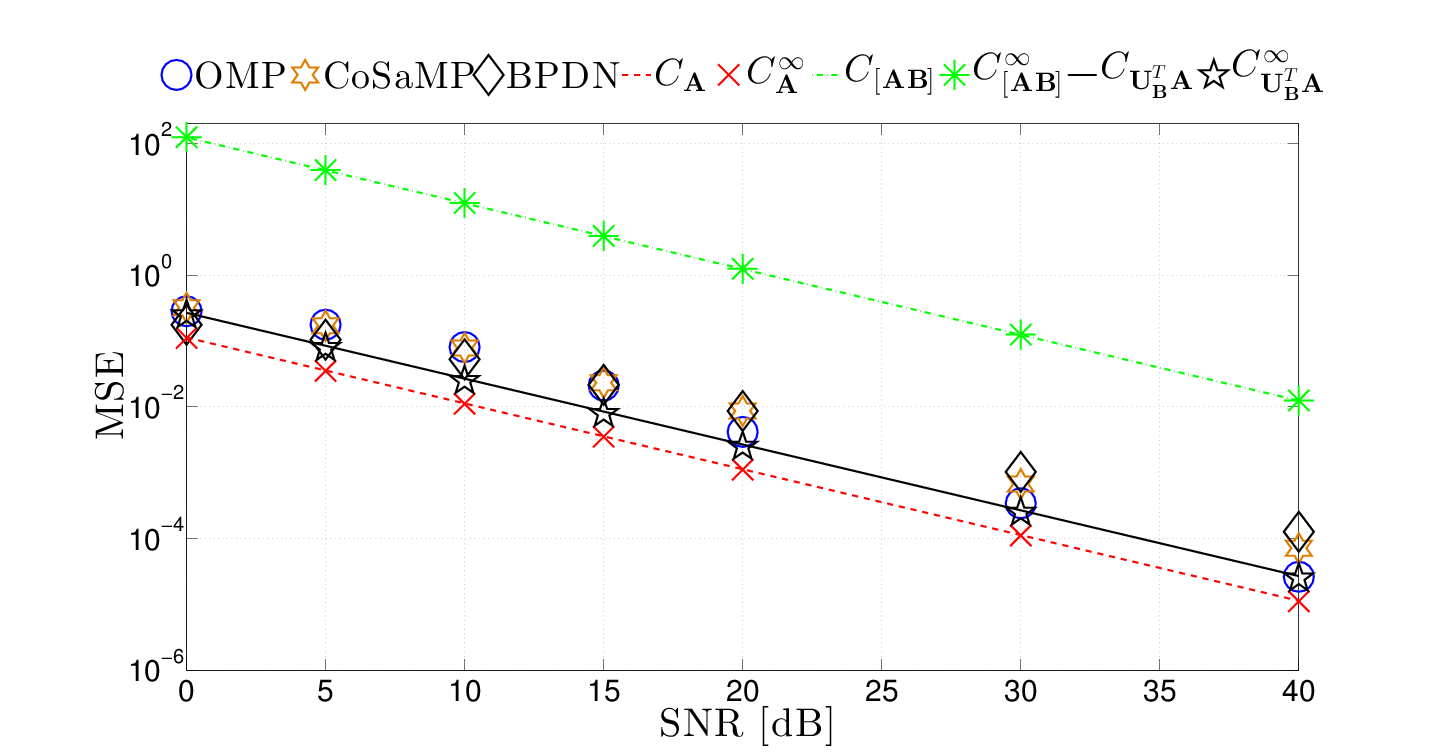}
 \caption{ MSE {\em Vs.} SNR in dB $L_A = 10$, $
L_B = 50$, $
N = 100$ and $
\sigma_\alpha^2 = 1$ and $
\sigma_\beta^2 = 100$}
 \label{fig5}
\end{figure}

\section{Conclusion}
In this work, the problem of interest is to derive the Bayesian performance bound for the estimation of the $L_A$ non-zero amplitudes of a sparse signal of interest based on the observation of a $N \times 1$ compressed measurement vector ${\bf y}$. Vector ${\bf y}$ follows an overcomplete Bayesian linear model corrupted by a set of interfering signals spanning an {\em a priori}  known $L_B$-dimensional subspace. Based on this standard assumption, the measurement vector ${\bf y}$ is confined in a $N-L_B$ subspace thanks to an orthogonal  deflation technique. In addition, the proposed analysis is done in the asymptotic framework, {\em i.e.},  for  $N,L_A,L_B \rightarrow \infty$  with finite asymptotic ratios. Our methodology allows to obtain $(i)$ an  easily interpretable expression of the bound, $(ii)$ a cheap computational cost especially in the doubly asymptotic scenario, $(iii)$ an accurate prediction of the mean-square-error (MSE) of popular sparse-based estimators and $(iv)$ a lower bound  for any amplitudes vector priors. Finally, several idealized scenarios are compared to the derived bound for a common output signal-to-noise-ratio (SNR) which shows the interest of our joint estimation/rejection methodology.

\section{Appendix : Proof of Lemma 1}

We first need to introduce the normalized trace of the resolvent for the random form  $\rho \mathbf{F}^{T} \bf{F}$  by the complex function $
h(z) = \frac{1}{L_{A}} {\rm Tr}\left\{ \left(\rho \mathbf{F}^{T}\mathbf{F}-z \mathbf{I}\right)^{-1}\right\}.$
Owing to the assumptions stated by Lemma \ref{lem:1} and due to  \citep{Marchenko1967}, we  know that the  empirical distribution of  $\rho \mathbf{F}^{T} \bf{F}$ defined by  
$
\hat{\mu}_{ \rho \mathbf{F}^{T} \bf{F}} = \frac{1}{L_{A}} \sum_{i=1}^{L_{A}} \delta_{\lambda_{i}( \rho \mathbf{F}^{T} \bf{F}) } $
with $\delta_{\lambda_{i}}$ the Dirac measure for eigenvalue $\lambda_{i}$, that is the unique probability measure satisfying $f(x) \delta_{\lambda} = f(\lambda)$ for any continuous function $f \in \mathbb{R}$; converges almost surely  in distribution towards a deterministic distribution function $\mu_{mp}$, \emph{i.e.} $\hat{\mu}_{ \rho \mathbf{F}^{T} \bf{F}} \overset{a.s.}{\longrightarrow}  \mu_{mp}$. The deterministic distribution $\mu_{mp}$ is supported on the compact interval $[\lambda_{-}, \lambda_{+}]$ possibly with a point mass at 0 \citep{Couillet} with  generalized density $
\frac{d\mu_{mp}(x)}{dx}  =  \frac{\sqrt{\left( \lambda_{+} -x \right) \left( x - \lambda_{-} \right)}}{2\pi \tilde{\rho} x}
$ with $
\lambda_{\pm}  =  (1 \pm \sqrt{\tilde{\rho}})^{2}$ and referred to as the Marchenko-Pastur density \citep{Marchenko1967}. For any distribution function $\mu$, the  Stieltjes Transform of $\mu$ denoted by $S_{\mu} (z) : \mathbb{C} \; \textbackslash$ Supp$(\mu) \longrightarrow \mathbb{C} $ is defined as $
\label{eq:stieltjesTF}
S_{\mu} (z)= \int \frac{\mu(d\lambda)}{\lambda -z}. $ It has been already shown \citep{Pastur2011} that the Stieltjes Transform $S_{\mu_{mp}}(z)$ of $\frac{d\mu_{mp}(x)}{dx}$ respects the following quadratic relation 
\begin{equation}
S_{\mu_{mp}}(z) = \frac{-1}{z} + \frac{\tilde{\rho}}{z}\frac{ S_{\mu_{mp}}(z)}{\left(1+ S_{\mu_{mp}}(z)\right)}\end{equation}
and proved to converge  \citep{Pastur2011}, towards $h(z)$. Notice now that the Stieltjes Transform $S_{\mu_{mp}}(z)$ weighted by $\frac{L_{A}}{N} \rightarrow\rho^{-1}$ and evaluated at $z=0$ corresponds to $
\frac{1}{L_{A}} {\rm Tr}\left\{ \left(\mathbf{F}^{T}\bf{F}\right)^{-1}\right\}$. With all these materials, \eqref{lem1:eq1} of lemma \ref{lem:1} is obtained by expressing the weighted Stieltjes Transform $ S_{\mu_{mp}}(0)$ through the above expression. To prove \eqref{lem1:eq2} we use the spectral theorem applied to empirical distribution $\hat{\mu}_{\rho \bf{F}^{T} \bf{F}}$, to obtain 
\begin{eqnarray}
\int f(x) \; \hat{\mu}_{\mathbf{F}^{T} \bf{F}}(dx)  &= &\frac{1}{L_{A}} \sum_{i=1}^{L_{A}} f(\lambda_{i}) =  \frac{1}{L_{A}} {\rm Tr}\left\{ f \left(  \mathbf{F}^{T} \bf{F} \right)\right\}.
\end{eqnarray} 

Since this empirical distribution converges when weighted by $\rho$ towards the Marchenko-Pastur distribution, \emph{i.e.} $\hat{\mu}_{ \rho \mathbf{F}^{T} \bf{F}} \overset{a.s.}{\longrightarrow}  \mu_{mp}$, we have consequently for $f$ a polynomial function the assertion 
\begin{eqnarray}
\frac{1}{L_{A}} {\rm Tr} \left\{ \left(\rho \mathbf{F}^{T} \mathbf{F} \right)^{k} \right\} & \overset{a.s.} {\longrightarrow} & \int_{\lambda_{-}}^{\lambda_{+}} x^{k} \mu_{mp}(dx) = \sum_{i=1}^{k} \frac{1}{k} \begin{pmatrix} k\\ i \end{pmatrix} \begin{pmatrix} k \\ i-1 \end{pmatrix} \tilde{\rho}
\end{eqnarray}

which reduces obviously to $\tilde{\rho}$ for $k = 1$. We have consequently $\frac{1}{L_{A}} {\rm Tr} \left\{ \left(  \mathbf{F}^{T}\bf{F} \right)^{k}\right\} \overset{a.s.}{\longrightarrow}  \rho^{-1} \tilde{\rho}$ and after obvious manipulations \eqref{lem1:eq2} is reached.

\section*{References}
\bibliographystyle{elsarticle-num}
\bibliography{Bibletter.bib}

\end{document}